\newtheorem{thm}{Theorem}
\newtheorem{claim}{Claim}
\newtheorem{corollary}{Corollary}
\newtheorem{rem}[thm]{Remark}
\newtheorem{definition}{Definition}
\newtheorem{definitionn}{}
\newtheorem{assumption}{}
\newcommand{\newdet}{\,\mathrm{det}}
\title{\LARGE \bf
A J-Spectral Factorization Condition for the Physical Realiazability of a Transfer Function Matrix with only Direct Feedthrough Quantum Noise}
\author{Rebbecca TY Thien, Shanon L. Vuglar and Ian R. Petersen
\thanks{This work was supported by the Australian Research Council under grant DP210101938. It was also supported by the Office of Naval Research Global under agreement number N62909-19-2129.}
\thanks{Rebbecca TY Thien is with the School of Engineering, The Australian National University, Canberra ACT 2601, Australia
        {\tt\small rebbecca.thien@anu.edu.au}}%
\thanks{Shanon L. Vuglar is with the College of STEM and Health Professions, John Brown University, Siloam Springs AR 72761, United States of America
        {\tt\small shanonvuglar@gmail.com}}%
\thanks{Ian R. Petersen is with the School of Engineering, The Australian National University, Canberra ACT 2601, Australia 
        {\tt\small i.r.petersen@gmail.com}}%
}
\begin{document}
\maketitle
\thispagestyle{empty}
\pagestyle{empty}

\begin{abstract}

This paper gives a J-spectral factorization condition for the implementation of a strictly proper transfer function matrix as a physically realizable quantum system using only direct feedthrough quantum noise. A necessary frequency response condition is also presented. Examples are included to illustrate the main results. 

\end{abstract}

\section{INTRODUCTION}

Quantum linear systems are a class of quantum systems whose dynamics take the specific form of a set of linear quantum stochastic differential equations~(QSDEs). Such linear quantum systems are common in the area of quantum optics~\cite{BachorRalph, WallsMilburn, GardinerZoller}.
Generally, a set of linear QSDEs need not correspond to a physically meaningful quantum system as they must satisfy additional constraints to represent a physical quantum system. 
The laws of quantum mechanics dictate that quantum systems evolve unitarily, implying that~(in the Heisenberg picture) certain canonical commutation relations~(CCR) are satisfied at all times. 
The notion of a physically realizable quantum linear stochastic system can be seen in~\cite{JamesNurdinPetersen} where the authors also derive a necessary and sufficient characterization for such systems. 

The authors in~\cite{VuglarPetersen} provided a condition given in terms of a non-standard algebraic Riccati equation for physically realizing a given transfer function matrix by only introducing direct feedthrough quantum noises. 
It is well-known~\cite{Molinari} that there is a relationship between spectral factorization and the solution of an algebraic Riccati equation. 
This motivates us to consider a J-spectral factorization approach to physically realize a given transfer function matrix.

In this work, we present a condition for realizing a given transfer function matrix in terms of a J-spectral factorization problem~\cite{Stefanovski,Kimura}.
This also leads to a necessary frequency response condition. 

The remainder of the paper proceeds as follows. In Section~\ref{backandrel}, we describe the quantum linear system models under consideration and define the corresponding notion of physical realizability. This section also gives some preliminary results. Then, in Section~\ref{mainresult}, we present our main results. Examples are given in Section~\ref{example} followed by a conclusion and future work in Section~\ref{conclusion}.
\section{BACKGROUND AND PRELIMINARY RESULTS} \label{backandrel}
\subsection{Quantum Linear Systems} 

The linear quantum systems considered here can be described by the following linear quantum stochastic differential equations~(LQSDEs)~\cite{HudsonParthasarathy, Parthasarathy, Belavkin, JamesNurdinPetersen,VuglarPetersen}:
    \begin{equation} \label{classicmodel}
        \begin{split}
            dx(t) & = Ax(t)dt+Bdw(t); \\
            dy(t) & = Cx(t)dt+Ddw(t) \\
        \end{split}
    \end{equation}
where $A, B, C$ and $D$ are real matrices in $\mathbb{R}^{n \times n}$, $\mathbb{R}^{n \times n_w}$, $\mathbb{R}^{n_y \times n}$ and $\mathbb{R}^{n_y \times n_w}$~($n, n_w, n_y$ are even positive integers), respectively. Moreover, $x(t)=[x_1(t) ... x_n(t)]$ is a column vector of self-adjoint, possibly non-commutative, system variables. 
 
Equations~(\ref{classicmodel}) must also preserve certain \textit{commutation relations} as follows:
    \begin{equation} \label{ccr}
        [x_j(t), x_k(t)]=x_j(t)x_k(t)-x_k(t)x_j(t)=2i \Theta_{jk}
    \end{equation}
where $\Theta$ is a real skew-symmetric matrix with components $\Theta_{jk}$ where $j,k = 1, ..., n$ and $i=\sqrt{-1}$ in order to represent the dynamics of a physically meaningful quantum system.

The \textit{commutation relations} (\ref{ccr}) are said to be \textit{canonical} (i.e., the system is fully quantum) if 
    \begin{equation} \label{theta}
        \Theta_m = diag(J, J, ..., J)
    \end{equation}
    where $J$ denotes the real skew-symmetric $2 \times 2$ matrix
    \begin{equation*}
        J =\begin{bmatrix} 
        0 & 1 \\
        -1 & 0
        \end{bmatrix}
    \end{equation*} 
and the ``\textit{diag}'' notation indicates a block diagonal matrix assembled from the given entries. Here $m$ denotes the dimension of the matrix $\Theta_m$.

The vector quantity $w$ describes the input signals and is assumed to admit the decomposition 
    \begin{equation*}
        dw(t)=\beta_w(t)dt+d\Tilde{w}(t)
    \end{equation*}
where the self-adjoint, adapted process $\beta_w(t)$ is the signal part of $dw(t)$ and $d\Tilde{w}$ is the noise part of $dw(t)$ \cite{HudsonParthasarathy, Parthasarathy,Belavkin}.The noise $\Tilde{w}(t)$ is a vector of self-adjoint quantum noises with Ito table
    \begin{equation*}
        d\Tilde{w}(t)d\Tilde{w}^T(t)=F_{\Tilde{w}}dt
    \end{equation*}
where $F_{\Tilde{w}}=S_{\Tilde{w}}+T_{\Tilde{w}}$ is a nonnegative Hermitian matrix \cite{Belavkin,Parthasarathy} with $S_{\Tilde{w}}$ and $T_{\Tilde{w}}$ are real and imaginary, respectively. 
In this paper, we will assume $F_{\Tilde{w}}$ is of the form $F_{\Tilde{w}} = I + i\Theta$ where $\Theta$ is of the form~(\ref{theta}).

In this work, we consider a special case of~(\ref{classicmodel}):
    \begin{equation} \label{specialcase}
        \begin{split}
            dx(t) & = Ax(t)dt+B_udu(t)+B_{v}dv(t); \\
            dy(t) & = Cx(t)dt+dv(t); \\
        \end{split}
    \end{equation}
see also \cite{JamesNurdinPetersen, VuglarPetersen}.
Here, $dw(t)$ from~(\ref{classicmodel}) has been partitioned into the signal input, $du(t)$~(a column vector with $n_u$ components) and the direct feed through quantum vacuum noise input, $dv(t)$. 
We could regard such a quantum system as a coherent controller in a coherent quantum feedback control system; e.g., see~\cite{JamesNurdinPetersen, VuglarPetersen}.


\subsection{Physical Realizability}
In~\cite{JamesNurdinPetersen}, the notion of physical realizability based around the concept of an open quantum harmonic oscillator is introduced. The following formally defines physical realizability.
    \begin{definition}
    The system~(\ref{classicmodel}) is said to be physically realizable if $\Theta$ is canonical and there exists a quadratic Hamiltonian operator \begin{math}\mathcal{H}=(1/2)x(0)^TRx(0) \end{math}, where $R$ is a real symmetric \begin{math}n \times n \end{math} matrix, and a coupling operator \begin{math} \mathcal{L} = \Lambda x(0) \end{math}, where $\Lambda$ is a complex-valued $\frac{n_w}{2} \times n$ coupling matrix such that matrices $A, B, C,$ and $D$ are given by 
        \begin{subequations} \label{oqhs}
            \begin{align}
                A & = 2 \Theta (R + \Im{(\Lambda^\dagger \Lambda)}) \label{A}\\
                B & = 2i\Theta[-\Lambda^\dagger \quad \Lambda^T]\Gamma \label{B}\\ 
                C & = P^T \label{C}
                    \begin{bmatrix} 
                        \Sigma_{n_y} & 0 \\
                        0 & \Sigma_{n_y} 
                    \end{bmatrix}
                    \begin{bmatrix} 
                    \Lambda+\Lambda^\# \\
                    -i\Lambda+i\Lambda^\# \end{bmatrix} \\
                    D & = [I_{n_y \times n_y} \quad 0_{n_y \times (n_w-n_y)}].\label{D}
            \end{align}
        \end{subequations}
    Here 
        \begin{equation*}
            \begin{split}
                \Gamma & =P_{N_w}diag_{N_w}(M); \\
                M & =\frac{1}{2}    
                    \begin{bmatrix} 
                        1 & i \\
                        1 & -i 
                    \end{bmatrix}; \\
                \Sigma_{N_y} & =[I_{N_y \times N_y} \quad 0_{N_y \times(N_w-N_y)}]; \\
                P_{N_w}(a_1, a_2,..., a_{2N_w})^T & =(a_1,..., a_{2N_w-1}, a_2,..., a_{2N_w})^T;
            \end{split}
        \end{equation*} 
    and diag$(M)$ is an appropriately dimensioned square block diagonal matrix with each diagonal block equal to the matrix $M$. 
    Note that the permutation matrix $P$ has the unitary property $PP^T=P^TP=I$ and $N_w=n_w/2$ and $N_y=n_y/2$.
    \end{definition}
    
The following theorem~\cite{JamesNurdinPetersen} gives necessary and sufficient conditions for the physical realizability of our system~(\ref{specialcase}).
    \begin{thm} \label{theoremPR}~\cite[Theorem 3.4]{JamesNurdinPetersen}
        The system~(\ref{specialcase}) is physically realizable if and only if
            \begin{equation*} 
                \begin{split}
                    A\Theta_n + \Theta_n A^T + B_{v}\Theta_{n_v} B_{v}^T + B_u\Theta_{n_u} B_u^T & = 0; \\
                    B_{v} \begin{bmatrix} 
                    I_{n_y \times n_y} \\
                    0_{(n_w-n_y)\times n_y}
                    \end{bmatrix} & = \Theta C^T diag(J); 
                \end{split}
            \end{equation*}
        where $\Theta_n, \Theta_{n_v}$ and $\Theta_{n_u}$ are all defined as in~(\ref{theta}) but may be of different dimensions.
\end{thm}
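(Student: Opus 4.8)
\emph{Proof strategy.} Since the claim is an equivalence, the plan is to prove the two implications separately, working throughout from the open quantum harmonic oscillator parametrization (\ref{A})--(\ref{D}) of the definition and from the canonical form $\Theta_n = \mathrm{diag}(J,\dots,J)$, for which $\Theta_n^T = -\Theta_n$ and $\Theta_n^{-1} = -\Theta_n$ (both following from $J^2 = -I$). For \emph{necessity}, assume the system is physically realizable, so that $A$, $B_u$, $B_v$ and $C$ arise from some symmetric $R$ and coupling matrix $\Lambda$. I would obtain the first equation as the infinitesimal condition that the dynamics preserve the canonical commutation relations: applying the quantum It\^o product rule to $x(t)x(t)^T$ and requiring its skew-symmetric part $2i\Theta_n$ to remain constant in $t$ forces the drift to cancel the It\^o correction $B\,\Im(F_w)\,B^T$, which is exactly the first equation once $B = [\,B_u\ \ B_v\,]$ and the block structure of $\Im(F_w)$ are substituted. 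Equivalently, one may verify it purely algebraically: inserting $A = 2\Theta_n(R + \Im(\Lambda^\dagger\Lambda))$ and using $R^T = R$ together with $\Im(\Lambda^\dagger\Lambda)^T = -\Im(\Lambda^\dagger\Lambda)$ (as $\Lambda^\dagger\Lambda$ is Hermitian), the symmetric part $R$ drops out and one is left with $A\Theta_n + \Theta_n A^T = 4\,\Theta_n\,\Im(\Lambda^\dagger\Lambda)\,\Theta_n$, which is then matched to $-(B_v\Theta_{n_v}B_v^T + B_u\Theta_{n_u}B_u^T)$ via (\ref{B}). The second equation I would read off by substituting the noise column of $B_v$ from (\ref{B}) and $C$ from (\ref{C}) and collapsing the products of the structural matrices $\Gamma$, $M$, $\Sigma$ and $P$.

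\emph{Sufficiency.} Conversely, assume both equations hold; the plan is to reconstruct $R$ and $\Lambda$. First I would recover $\Lambda$ from $C$: the map $\Lambda \mapsto C$ in (\ref{C}) is injective once the selection and permutation matrices are stripped off, so $\Lambda$ is uniquely determined, and the second equation certifies that this $\Lambda$ reproduces the given noise coupling $B_v$ through (\ref{B}). With $\Lambda$ fixed, solving (\ref{A}) for $R$ using $\Theta_n^{-1} = -\Theta_n$ forces the definition $R := -\tfrac12\,\Theta_n A - \Im(\Lambda^\dagger\Lambda)$. The decisive point is that this $R$ must be real and symmetric for $\mathcal{H} = \tfrac12 x(0)^T R x(0)$ to be a legitimate Hamiltonian, and this is precisely where the first equation is consumed: computing $R - R^T$, substituting the identity $\Theta_n A + A^T\Theta_n = -\Theta_n(A\Theta_n + \Theta_n A^T)\Theta_n$ and then $A\Theta_n + \Theta_n A^T = -(B_v\Theta_{n_v}B_v^T + B_u\Theta_{n_u}B_u^T)$ from the first equation, one checks that $R - R^T$ vanishes. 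Exhibiting $R$ and $\Lambda$ then displays the system as an open quantum harmonic oscillator, hence physically realizable.

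\emph{Main obstacle.} I expect the real work to lie not in the high-level logic but in the bookkeeping of the complexified (``doubled-up'') representation: correctly tracking $\Lambda$, $\Lambda^\#$, $\Lambda^T$ and $\Lambda^\dagger$, cleanly separating real and imaginary parts, and reducing the compositions of $P$, $\Gamma$, $M$ and $\Sigma$ to the identities that make the two equations fall out. In particular, verifying the second equation --- that the top block of $B_v$ selected by $[\,I\ \ 0\,]^T$ equals $\Theta_n C^T \mathrm{diag}(J)$ --- hinges on the unitarity $PP^T = I$ and on the specific entries of $M$, and this is the step I would budget the most care for.
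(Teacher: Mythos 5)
The paper itself offers no proof of this theorem --- it is quoted from~\cite[Theorem 3.4]{JamesNurdinPetersen} --- so the comparison must be with the proof in that reference. Your overall architecture (necessity by substituting the parametrization~(\ref{A})--(\ref{D}) and using $\Theta_n^T=-\Theta_n$, $\Theta_n^2=-I$; sufficiency by reconstructing a symmetric $R$ and a coupling matrix $\Lambda$) matches the reference, and your necessity computation is sound: the symmetric $R$ cancels, $A\Theta_n+\Theta_n A^T=4\Theta_n\Im(\Lambda^\dagger\Lambda)\Theta_n$, and this is matched against $-\bigl(B_v\Theta_{n_v}B_v^T+B_u\Theta_{n_u}B_u^T\bigr)$ via~(\ref{B}) and the identity $MJM^T=-\tfrac{i}{2}J$.

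The sufficiency direction, however, has a genuine gap. You propose to recover $\Lambda$ from $C$, asserting that the map $\Lambda\mapsto C$ in~(\ref{C}) ``is injective once the selection and permutation matrices are stripped off.'' It is not: $\Sigma_{N_y}=[\,I_{N_y\times N_y}\ \ 0_{N_y\times(N_w-N_y)}\,]$ is a wide selection matrix with nontrivial kernel whenever $N_w>N_y$, which is exactly the situation here because $dw$ comprises both $du$ and $dv$, so $n_w=n_u+n_v>n_y$. Hence $C$ determines only the first $N_y$ rows of $\Re\Lambda$ and $\Im\Lambda$; the remaining rows, which enter both $\Im(\Lambda^\dagger\Lambda)$ in~(\ref{A}) and the full input matrix in~(\ref{B}), are left completely free. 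Your candidate $R:=-\tfrac12\Theta_n A-\Im(\Lambda^\dagger\Lambda)$ is therefore not even well defined, and the verification that~(\ref{B}) reproduces the given $B_u$ and $B_v$ cannot be carried out. The cited proof runs the reconstruction the other way around: $\Lambda$ is recovered from the \emph{full} matrix $B=[\,B_u\ \ B_v\,]$ via~(\ref{B}), which is invertible because $\Gamma$ is a permutation times block-diagonal invertible $M$'s and $\Theta_n$ is invertible; $R$ is taken as $\tfrac14(-\Theta_n A+A^T\Theta_n)$, symmetric by construction; the first displayed condition is then consumed to verify~(\ref{A}), and the second displayed condition is consumed to verify that this $\Lambda$ yields~(\ref{C}). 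In other words, the roles you assign to the two equations in the sufficiency argument are effectively swapped, and only the reference's assignment closes the argument.
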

Here $(.)^{\dagger}$ denotes the complex conjugate transpose of a matrix while $(.)^{\#}$ denotes the complex conjugate of a matrix.

We consider a strictly proper $n_y \times n_u$ transfer function matrix $G(s)$ with McMillan degree~\cite{Kalman} $n$ where $n$, $n_u$ and $n_y$ are all even.

\begin{definition}\label{dfn2}
Consider an $n_y \times n_u$ strictly proper transfer function matrix $G(s)$ of McMillan degree $n$ where $n$, $n_u$ and $n_y$ are all even. $G(s)$ is said to be physically realizable with only direct feedthrough quantum noise if there exists a minimal realization $G(s)=C(sI-A)^{-1}B_u$ and a matrix $B_{v}$ such that the system~(\ref{specialcase}) is physically realizable.
\end{definition}
The following theorem from \cite{VuglarPetersen} gives a state-space condition in the form of a non-standard algebraic Riccati equation~(NSARE) under which a strictly proper transfer function can be implemented as a physically realizable quantum system, which only introduces direct feedthrough quantum noise.

\begin{thm}\label{thm2}
Consider an $n_y \times n_u$ strictly proper transfer function matrix $G(s)$ of McMillan degree $n$ with minimal state-space realization
    \begin{equation}\label{G(s)}
        G(s)=C(sI-A)^{-1}B_{u}
    \end{equation}
where $n$, $n_u$ and $n_y$ are all even.
This transfer function matrix is physically realizable with only direct feedthrough quantum noise if and only if the algebraic Riccati equation
    \begin{equation} \label{NSARE}
        A^TX + XA - XB_u\Theta_{n_u}B_u^TX + C^T\Theta_{n_y}C = 0
    \end{equation}
has a non-singular, real, skew-symmetric solution X. 
\end{thm}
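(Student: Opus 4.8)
The plan is to reduce physical realizability, as characterized by Theorem~\ref{theoremPR}, to the solvability of~(\ref{NSARE}) by identifying the state commutation matrix with the inverse of the Riccati variable. Concretely, I would work with the correspondence $\Theta = -X^{-1}$ (equivalently $X = -\Theta^{-1}$), under which a non-singular real skew-symmetric $X$ corresponds to a non-singular real skew-symmetric commutation matrix $\Theta$. The two ingredients are (i) an algebraic identity that turns the first condition of Theorem~\ref{theoremPR} into~(\ref{NSARE}) after conjugation by $\Theta^{-1}$, and (ii) a realization-change argument that reconciles the \emph{given} minimal realization $(A,B_u,C)$ with the fact that physical realizability only requires the existence of \emph{some} realization carrying a canonical $\Theta_n$.

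For the ``if'' direction, suppose~(\ref{NSARE}) has a non-singular real skew-symmetric solution $X$. Set $\Theta = -X^{-1}$ and $B_v = \Theta C^T \Theta_{n_y}$. Using $\Theta_{n_y}^2 = -I$ one computes $B_v \Theta_{n_y} B_v^T = -\Theta C^T \Theta_{n_y} C \Theta$, so that the first condition of Theorem~\ref{theoremPR} becomes $A\Theta + \Theta A^T - \Theta C^T \Theta_{n_y} C \Theta + B_u \Theta_{n_u} B_u^T = 0$; pre- and post-multiplying by $X = -\Theta^{-1}$ recovers exactly~(\ref{NSARE}). The second condition of Theorem~\ref{theoremPR} holds by the very choice of $B_v$. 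At this stage $\Theta$ is non-singular and skew-symmetric but not yet canonical, so I would invoke the standard fact that every non-singular real skew-symmetric matrix is congruent to $\mathrm{diag}(J,\dots,J)$: choose $T$ with $T\Theta T^T = \Theta_n$ and pass to the similar realization $(TAT^{-1}, TB_u, CT^{-1})$, which still realizes $G(s)$ minimally. A direct check shows both conditions of Theorem~\ref{theoremPR} are covariant under the simultaneous maps $A\mapsto TAT^{-1}$, $B_u \mapsto TB_u$, $C\mapsto CT^{-1}$, $\Theta \mapsto T\Theta T^T$, $B_v \mapsto TB_v$, so the transformed system satisfies Theorem~\ref{theoremPR} with canonical $\Theta_n$ and is therefore physically realizable.

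For the ``only if'' direction, suppose $G(s)$ is physically realizable. By Definition~\ref{dfn2} there is a minimal realization $(\hat A, \hat B_u, \hat C)$ and a matrix $B_v$ for which system~(\ref{specialcase}) is physically realizable; here the output structure $dy = \hat C x\,dt + dv$ forces $\dim v = n_y$, so $B_v \in \mathbb{R}^{n\times n_y}$ and the second condition gives $B_v = \Theta_n \hat C^T \Theta_{n_y}$ with $\Theta_n$ canonical. Substituting into the first condition and conjugating by $\Theta_n^{-1}$ exactly as above shows that $\hat X = -\Theta_n^{-1}$ is a non-singular real skew-symmetric solution of~(\ref{NSARE}) for $(\hat A, \hat B_u, \hat C)$. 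Since any two minimal realizations of $G(s)$ differ by a similarity $T$, and since a short computation shows that $X' = T^{-T} X T^{-1}$ solves~(\ref{NSARE}) for $(TAT^{-1}, TB_u, CT^{-1})$ whenever $X$ solves it for $(A,B_u,C)$ --- with skew-symmetry and non-singularity preserved --- the existence of $\hat X$ yields a non-singular real skew-symmetric solution for the given realization $(A,B_u,C)$.

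The step I expect to be the main obstacle is item~(ii): cleanly reconciling the \emph{given} realization with the canonical-form requirement. The algebraic manipulation in~(i) is routine once the sign convention $X=-\Theta^{-1}$ is pinned down, but making the realization-change argument airtight requires two linear-algebraic facts: first, that the congruence bringing a non-singular skew-symmetric $\Theta$ to $\mathrm{diag}(J,\dots,J)$ exists and doubles as a legitimate state-similarity transformation; and second, that solvability of~(\ref{NSARE}) by a non-singular skew-symmetric matrix is invariant under such transformations. A secondary point to verify carefully is that the special structure of~(\ref{specialcase}) indeed forces the feedthrough noise to be exactly $n_y$-dimensional, so that no additional noise term survives in the first condition of Theorem~\ref{theoremPR}; otherwise the reduction to~(\ref{NSARE}) would acquire an extra term.
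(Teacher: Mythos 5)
Your proposal is correct and is essentially the argument the paper relies on: the paper's own proof is only a citation of Theorems 1 and 2 of \cite{VuglarPetersen} together with the remark that ``a suitable state-space transformation'' is needed, and your correspondence $\Theta=-X^{-1}$ (which checks out sign-wise against the conditions of Theorem~\ref{theoremPR}, using $B_v=\Theta C^T\Theta_{n_y}$ and $\Theta_{n_y}^3=-\Theta_{n_y}$) plus the congruence-to-$\Theta_n$ and similarity-invariance steps are precisely what that citation and that transformation amount to. The two obstacles you flag both resolve affirmatively --- every non-singular real skew-symmetric matrix of even size is congruent to $\mathrm{diag}(J,\dots,J)$, and the output equation $dy=Cx\,dt+dv$ in~(\ref{specialcase}) forces $v$ to be exactly $n_y$-dimensional --- so there is no gap.
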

\begin{proof}
Sufficiency follows directly from Theorem~2 of~\cite{VuglarPetersen}. Necessity is straighftorward to verify from Theorem~1 of~\cite{VuglarPetersen} using a suitable state-space transformation.
\end{proof}

Associated with the state-space realization~(\ref{G(s)}) and the Riccati equation~(\ref{NSARE}) is the Hamiltonian matrix
\begin{equation}\label{hamil}
    H  = 
    \begin{bmatrix}
    A & -B_u\Theta_{n_u}B_u^T \\
    -C^T\Theta_{n_y}C & -A^T
    \end{bmatrix}
\end{equation}
where $\Theta_{n_u}$ and $\Theta_{n_y}$ are defined as in~(\ref{theta}). 

\begin{rem}\label{remark}
Note that $H$ and $-H^T$ are similar whereby $\lambda_H$ is an eigenvalue of $H$ if and only if $-\lambda_H^{\dagger}$ is also an eigenvalue of $H$; e.g., see~\cite[pp. 327-328]{ZhouDoyleGlover}.
\end{rem}

In this paper, instead of a state-space condition, we want to find a frequency response condition such that a given transfer function matrix is a physically realizable quantum system with only direct feedthrough quantum noise.
To achieve this, we use a characterization of the existence of a solution to the NSARE~(\ref{NSARE}) in terms of the J-spectral factorization of a rational matrix following the approach of~\cite{Molinari}.

\section{Main Result} \label{mainresult}
In this section, we will show that a given transfer function matrix $G(s)$ is physically realizable with only direct feedthrough noise if and only if there exists a J-spectral factorization of a certain transfer function matrix $\Phi_J(s)$. This $n_u \times n_u$ matrix $\Phi_J(s)$ is defined as
\begin{equation}\label{oriphij}
    \Phi_J(s) = \Theta_{n_u} - G^\sim(s) \Theta_{n_y}G(s)
\end{equation}
where $G^\sim(s) = G(-s)^T$. 
We now consider some assumptions of $G(s)$. 
For a given minimal realization $G(s) = C(sI-A)^{-1}B_u$, we assume the following
    \begin{assumption}\label{assumption1}
        The matrix $A$ is Hurwitz;
    \end{assumption}
    \begin{assumption}\label{assumption3}
        The matrix $A$ and the Hamiltonian matrix defined in~(\ref{hamil}) $H$ have no common eigenvalues.
    \end{assumption}
Note, it follows from the property of Hamiltonian matrices given in Remark~\ref{remark} that Assumption~\ref{assumption3} also implies that the matrix $-A^T$ and the matrix $H$ will have no common eigenvalues.
    

\subsection{A J-Spectral Factorization Problem}
The J-spectral factorization problem considered in this paper is defined as follows:
\begin{definition}\label{definition1}
An $n_u \times n_u$ rational matrix $N(s)$ defines a J-spectral factorization of $\Phi_J(s)$ if the following conditions hold:
    \begin{definitionn}
    $\Phi_J(s) = N^{\sim}(s)\Theta_{n_u}N(s)$;\label{d1}
    \end{definitionn}
     \begin{definitionn}
    $N(s)$ is analytic in Re $s \geq 0$; \label{d2}
    \end{definitionn}
     \begin{definitionn}
    $N^{-1}(s)$ has no poles in common with $N(s)$; \label{d3}
    \end{definitionn}
     \begin{definitionn}
    $\lim_{s\to\infty} N(s) = I$. \label{d4}
    \end{definitionn}
\end{definition}

\begin{thm}\label{theorem}
Let $G(s)$ be a given $n_y \times n_u$ strictly proper transfer function matrix with McMillan degree $n$ and minimal realization $G(s) = C(sI-A)^{-1}B_u$ where $n$, $n_u$ and $n_y$ are all even. 
Also, let $\Phi_J(s)$ be defined as in~(\ref{oriphij}) and suppose Assumptions~\ref{assumption1}-~\ref{assumption3} are satisfied. 
Then $G(s)$ is physically realizable with only direct feedthrough quantum noise if and only if $\Phi_J(s)$ has a J-spectral factorization.
\end{thm}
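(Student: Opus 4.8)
The plan is to route the proof through Theorem~\ref{thm2}, which already reduces physical realizability with only direct feedthrough quantum noise to the solvability of the NSARE~(\ref{NSARE}) by a non-singular, real, skew-symmetric $X$. Thus it suffices to establish the equivalence: the NSARE~(\ref{NSARE}) admits a non-singular real skew-symmetric solution $X$ if and only if $\Phi_J(s)$ in~(\ref{oriphij}) has a J-spectral factorization in the sense of conditions~\ref{d1}--\ref{d4}. This is a Molinari-type correspondence~\cite{Molinari} between an algebraic Riccati equation and a (J-)spectral factorization, specialised to the para-skew-Hermitian setting dictated by the skew-symmetric matrices $\Theta_{n_u},\Theta_{n_y}$.

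For sufficiency I would exhibit the factor explicitly. Given a solution $X$, set $N(s)=I+\Theta_{n_u}B_u^TX(sI-A)^{-1}B_u$. Condition~\ref{d4} is immediate since $G$ is strictly proper, and condition~\ref{d2} follows from Assumption~\ref{assumption1} because the only poles of $N$ are the eigenvalues of the Hurwitz matrix $A$. The main computation is condition~\ref{d1}: using $X^T=-X$ and $\Theta_{n_u}^T=-\Theta_{n_u}$ one gets $N^\sim(s)=I+B_u^T(-sI-A^T)^{-1}XB_u\Theta_{n_u}$, and on expanding $N^\sim\Theta_{n_u}N$ the one-sided terms collapse through the resolvent identity $(-sI-A^T)^{-1}X+X(sI-A)^{-1}=-(-sI-A^T)^{-1}(A^TX+XA)(sI-A)^{-1}$; substituting~(\ref{NSARE}) then reproduces $\Theta_{n_u}-G^\sim\Theta_{n_y}G=\Phi_J$. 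For condition~\ref{d3}, the poles of $N^{-1}$ are the eigenvalues of the closed-loop matrix $A-B_u\Theta_{n_u}B_u^TX$; since $X$ solves~(\ref{NSARE}) the columns of $\left[\begin{smallmatrix}I\\X\end{smallmatrix}\right]$ span an $H$-invariant subspace for $H$ in~(\ref{hamil}), so these eigenvalues lie in the spectrum of $H$, which by Assumption~\ref{assumption3} (and Remark~\ref{remark}) is disjoint from the eigenvalues of $A$, i.e.\ from the poles of $N$.

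For necessity, given a factor $N$ satisfying~\ref{d1}--\ref{d4}, I would first locate its poles. The poles of $\Phi_J$ split into the left-half-plane eigenvalues of $A$ (from $G$) and the right-half-plane eigenvalues of $-A^T$ (from $G^\sim$), these two sets being disjoint by Assumption~\ref{assumption1}. Since $N$ is analytic in $\mathrm{Re}\,s\ge 0$ (condition~\ref{d2}) while $N^\sim$ carries the right-half-plane poles, the poles of $N$ must be exactly the eigenvalues of $A$; matching residues against $\Phi_J$ then forces, up to a state-space change of coordinates, $N(s)=I+\Theta_{n_u}B_u^TX(sI-A)^{-1}B_u$ for some real $X$, and substituting into~\ref{d1} via the same additive decomposition forces $X$ to satisfy the NSARE~(\ref{NSARE}). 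Skew-symmetry of $X$ I would obtain from the para-skew-Hermitian identity $\Phi_J^\sim=-\Phi_J$: this forces the associated $H$-invariant graph subspace to be isotropic with respect to the symmetric form $\left[\begin{smallmatrix}0&I\\I&0\end{smallmatrix}\right]$ --- equivalently $\left[\begin{smallmatrix}0&I\\I&0\end{smallmatrix}\right]H$ is skew-symmetric, which is exactly the eigenvalue pairing of Remark~\ref{remark} --- and isotropy of the graph of $X$ is equivalent to $X+X^T=0$.

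The hard part is the remaining claim that the $X$ produced in the necessity direction is non-singular, since that is precisely what Theorem~\ref{thm2} demands. The explicit construction above shows that \emph{any} skew-symmetric solution, singular or not, yields a factor obeying~\ref{d1}--\ref{d4}, so non-singularity cannot be read off from the factorization conditions in isolation; it must be extracted from Assumptions~\ref{assumption1}--\ref{assumption3} together with a McMillan-degree count. The argument I would pursue is that $\Phi_J$ has McMillan degree $2n$ (no pole/zero cancellation between $G$ and $G^\sim$, using Assumption~\ref{assumption3} to keep the J-spectral zeros, i.e.\ the eigenvalues of $H$, away from the eigenvalues of $A$), so $N$ itself must have full degree $n$; then, assuming $X$ singular, I would analyse $\ker X$ through~(\ref{NSARE}) --- for $v\in\ker X$ one obtains $XAv=-C^T\Theta_{n_y}Cv$ and $(Cw)^T\Theta_{n_y}(Cv)=0$ for all $v,w\in\ker X$ --- and aim to produce a common eigenvector of $A$ and $H$, contradicting Assumption~\ref{assumption3}. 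Closing this last step rigorously, namely showing that $\ker X$ supports an $A$-invariant direction on which $C^T\Theta_{n_y}C$ degenerates, is the main obstacle, and is where the minimality of the realization and the full strength of the assumptions must be used.
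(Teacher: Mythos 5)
Your overall route is the same as the paper's: both directions are funneled through Theorem~\ref{thm2}, the factor in the ``realizable $\Rightarrow$ factorization'' direction is the identical $N(s)=I+\Theta_{n_u}B_u^TX(sI-A)^{-1}B_u$ verified via the same resolvent identity, and the converse is a Molinari-style reconstruction of $X$ from the stable/antistable additive splitting of $\Phi_J$. Two of your local arguments differ from the paper's but are sound and arguably cleaner: for condition~\ref{d3} you use the $H$-invariance of the graph of $X$ to place the poles of $N^{-1}$ inside $\mathrm{spec}(H)$, where the paper instead derives the determinant identity $\det\Phi_J(s)=\det(sI-H)/(\det(sI+A^T)\det(sI-A))$ and factors it through the closed-loop characteristic polynomials; and you get skew-symmetry from isotropy of the graph subspace, where the paper adds each Lyapunov equation to its transpose and uses that $A$ (resp.\ $F$) is Hurwitz. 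Your converse direction is considerably sketchier than the paper's: ``matching residues \ldots forces, up to a state-space change of coordinates'' is exactly the content of the paper's Claims~1--3, which it establishes by introducing the Lyapunov solutions $L$ and $Y$ of $A^TL+LA+C^T\Theta_{n_y}C=0$ and $F^TY+YF+N_A^T\Theta_{n_u}N_A=0$, proving minimality of each additive piece by the McMillan degree count $\delta(\Phi_J)=2n$, equating stable and antistable parts to get $F=A$, $E=B_u$, and setting $X=L+Y$. You would need to supply that level of detail for the reconstruction to be a proof rather than a plan.

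The genuine gap is the one you name yourself: Theorem~\ref{thm2} requires a \emph{non-singular} skew-symmetric solution, your construction only delivers a skew-symmetric one, and your proposed repair (producing a common eigenvector of $A$ and $H$ from $\ker X$) does not obviously close. The obstruction is exactly where you stop: for $v\in\ker X$ one gets $XAv=-C^T\Theta_{n_y}Cv$, so $\ker X$ is $A$-invariant only if $C^T\Theta_{n_y}C$ annihilates it, and nothing in your sketch forces that; without $A$-invariance you cannot extract the eigenvector that would contradict Assumption~\ref{assumption3}. So as written the proposal proves ``factorization $\Rightarrow$ skew-symmetric solution of~(\ref{NSARE})'' but not ``$\Rightarrow$ physically realizable.'' You should be aware that the paper's own sufficiency argument has the same hole --- it concludes that $X=L+Y$ is a skew-symmetric solution and immediately invokes Theorem~\ref{thm2} without verifying non-singularity --- so you have correctly isolated the one step that neither you nor the paper actually discharges; identifying and closing it (or showing it follows from minimality of the realization together with Assumptions~\ref{assumption1}--\ref{assumption3}) is what remains to make either argument complete.
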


\begin{proof}
    The proof is structured as follows and follows~\cite{Molinari}: We prove necessity and sufficiency for the existence of a skew-symmetric solution $X$ to the NSARE~(\ref{NSARE}) and then apply Theorem~\ref{thm2}. 
    
Necessity: Suppose $G(s)$ is physically realizable with only direct feedthrough quantum noise. 
It follows from Theorem~\ref{thm2} that there exists an X which is a skew-symmetric solution of the NSARE~(\ref{NSARE}). 
Let $N(s)$ be defined as follows
\begin{equation}\label{N}
    N(s) = I + \Theta_{n_u}B_u^TX(sI-A)^{-1}B_u.
\end{equation}
We first show $N(s)$ satisfies~\ref{d1} in Definition~\ref{definition1}. Indeed
    \begin{equation}\label{theoremeqn1}
        \begin{split}
        & N^{\sim}(s)\Theta_{n_u}N(s) \\
        & = [I+B_u^T(-sI-A^T)^{-1}XB_u\Theta_{n_u}]\Theta_{n_u}\\
        &\qquad \qquad \qquad \qquad \times [I+\Theta_{n_u}B_u^TX(sI-A)^{-1}B_u]\\
        & = \Theta_{n_u} - B_u^TX(sI-A)^{-1}B_u - B_u^T(-sI-A^T)^{-1}XB_u \\
        & - B_u^T(-sI-A^T)^{-1}XB_u\Theta_{n_u}B_u^TX(sI-A)^{-1}B_u.
        \end{split}
    \end{equation}
Also, the NSARE~(\ref{NSARE}) implies
    \begin{equation*}
        -(-sI-A^T)X-X(sI-A)+C^T\Theta_{n_y}C=XB_u\Theta_{n_u}B_u^TX
    \end{equation*}
    for any $s \in \mathbb{C}$ and hence
    \begin{multline*}
    -X(sI-A)^{-1}-(-sI-A^T)^{-1}X \\ 
    +(-sI-A^T)^{-1}C^T\Theta_{n_y}C(sI-A)^{-1} \\
    =(-sI-A^T)^{-1}XB_u\Theta_{n_u}B_u^TX(sI-A)^{-1}.
    \end{multline*}
Substituting this result into equation~(\ref{theoremeqn1}), it follows that
    \begin{equation*}
        \begin{split}
        & N^{\sim}(s)\Theta_{n_u}N(s) \\
        & = \Theta_{n_u} - B_u^T(-sI-A^T)^{-1}C^T\Theta_{n_u}C~(sI-A)^{-1}B_u \\
        & = \Phi_J(s).
        \end{split}
    \end{equation*}
Thus, we have established~\ref{d1} of Definition~\ref{definition1}.

In order to establish~\ref{d2} of the definition, note that Assumption~\ref{assumption1} implies that the $N(s)$ is analytic in $Re$ $s \geq 0$.

To show condition~\ref{d3}, note that $\Phi_J(s)$ in~(\ref{oriphij}) can be rewritten in the form
\begin{equation*}
    \Phi_J(s) = \Theta_{n_u} + 
        \begin{bmatrix} 
        0 & B^T_u
        \end{bmatrix}
        \begin{bmatrix} 
        sI-A & 0 \\
        -C^T\Theta_{n_y}C & sI + A^T
        \end{bmatrix}^{-1}
        \begin{bmatrix} 
        B_u \\ 0
        \end{bmatrix}.
\end{equation*} 
Taking the determinant of the above equation and using the determinant relation from~\cite[p. 135, Fact 2.14.13]{Bernstein}, it follows that
\begin{equation*}
\begin{split}
       & \newdet\left(\begin{bmatrix} 
        sI-A & 0 \\
        -C^T\Theta_{n_y}C & sI + A^T
        \end{bmatrix}\right)\newdet(\Phi_J(s)) \\ & = \mathrm{det}(\Theta_{n_u})\\
        & \qquad \times \newdet\left(\begin{bmatrix} 
        sI-A & 0 \\
        -C^T\Theta_{n_y}C & sI + A^T
        \end{bmatrix}
        + \begin{bmatrix} 
        B_u \\
        0
        \end{bmatrix}\Theta^{-1}_{n_u}\begin{bmatrix} 
        0 & B^T_u
        \end{bmatrix}\right)
\end{split}
\end{equation*} 
and hence
\begin{equation*}
\begin{split}
       & \mathrm{det}(sI+A^T)\mathrm{det}(sI-A)\mathrm{det}(\Phi_J(s)) \\
        & = \mathrm{det}(\Theta_{n_u})\mathrm{det}\left(\begin{bmatrix} 
        sI-A & -B_u\Theta_{n_u}B^T_u \\
        -C^T\Theta_{n_y}C & sI + A^T
        \end{bmatrix}\right),
\end{split}
\end{equation*}
where  $\mathrm{det}(\Theta_{n_u})=1$.
Now, using the Hamiltonian matrix~(\ref{hamil}), we get
\begin{align} \label{detphij}
    \mathrm{det}(\Phi_J(s)) = \frac{\mathrm{det}(sI-H)}{\mathrm{det}(sI+A^T)\mathrm{det}(sI-A)}.
\end{align}
Furthermore, using the fact that there exist a skew-symmetric solution $X$ of the NSARE~(\ref{NSARE}), it follows that $\Phi_J(s)$ can be represented as
\begin{equation*}
\begin{split}
   & \Phi_J(s)\\ 
   & = \Theta_{n_u} - B_u^T(-sI-A^T)^{-1}C^T\Theta_{n_y}C~(sI-A)^{-1}B_u \\
   & = \Theta_{n_u} - B_u^T[X(sI-A)^{-1}+(-sI-A^T)^{-1}X \\
   & +(-sI-A^T)^{-1}XB_u\Theta_{n_u}B_u^TX(sI-A)^{-1}]B_u \\
   & = \Theta_{n_u} - B_u^TX(sI-A)^{-1}B_u - B_u^T(-sI-A^T)^{-1}XB_u \\
   & - B_u^T(-sI-A^T)^{-1}XB_u\Theta_{n_u}B_u^TX(sI-A)^{-1}B_u \\
   & = [I + B_u^T(-sI-A^T)^{-1}XB_u\Theta_{n_u}]\Theta_{n_u}\\&\qquad \qquad \qquad \qquad \times [I + \Theta_{n_u}B_u^TX(sI-A)^{-1}B_u].
\end{split} 
\end{equation*}
This implies
\begin{multline}\label{finddetphiJ}
    \mathrm{det}(\Phi_J(s)) = \mathrm{det}(I + (-B_u^T)(sI+A^T)^{-1}XB_u\Theta_{n_u}) \\ 
   \times \mathrm{det}(\Theta_{n_u})\mathrm{det}(I +  \Theta_{n_u}B_u^TX(sI-A)^{-1}B_u).
\end{multline}
Since $\mathrm{det}(\Theta_{n_u})=1$, then
\begin{multline*}
    \mathrm{det}(\Phi_J(s)) = \mathrm{det}(I + (-B_u^T)(sI+A^T)^{-1}XB_u\Theta_{n_u}) \\ 
   \times \mathrm{det}(I + \Theta_{n_u}B_u^TX(sI-A)^{-1}B_u).
\end{multline*}
Now, considering the right-hand side of this equation using the determinant relation from~\cite[p. 135, Fact 2.14.13]{Bernstein}, it follows that 
\begin{equation*}
\begin{split}
       & \mathrm{det}(sI+A^T)\mathrm{det}(I + (-B_u^T)(sI+A^T)^{-1}XB_u\Theta_{n_u}) \\ 
       & = \mathrm{det}(I)\mathrm{det}(sI+A^T-XB_u\Theta_{n_u}I^{-1}B_u^T) \\
       & = \mathrm{det}(sI+A^T-XB_u\Theta_{n_u}B_u^T)
\end{split}
\end{equation*}and hence
\begin{equation}\label{detphijwithM1}
\begin{split}
       & \mathrm{det}(I + (-B_u^T)(sI+A^T)^{-1}XB_u\Theta_{n_u}) \\ 
       & = \frac{\mathrm{det}(sI+A^T-XB_u\Theta_{n_u}B_u^T)}{\mathrm{det}(sI+A^T)}.
\end{split}
\end{equation}
Similarly,
\begin{equation*}
\begin{split}
       & \mathrm{det}(sI-A)\mathrm{det}(I + \Theta_{n_u}B_u^TX(sI-A)^{-1}B_u) \\ 
       & = \mathrm{det}(I)\mathrm{det}(sI-A+B_uI^{-1}\Theta_{n_u}B_u^TX) \\
       & = \mathrm{det}(sI-A+B_u\Theta_{n_u}B_u^TX)
\end{split}
\end{equation*}and hence
\begin{equation}\label{detphijwithM2}
\begin{split}
       & \mathrm{det}(I + \Theta_{n_u}B_u^TX(sI-A)^{-1}B_u)  \\ 
       & = \frac{\mathrm{det}(sI-A+B_u\Theta_{n_u}B_u^TX)}{\mathrm{det}(sI-A)}.
\end{split}
\end{equation}
Substituting equations~(\ref{detphijwithM1}) and~(\ref{detphijwithM2}) into equation~(\ref{finddetphiJ}), it follows that
\begin{align*}
    & \mathrm{det}(\Phi_J(s))\\ & = \frac{\mathrm{det}(sI-A+B_u\Theta_{n_u}B_u^TX)}{\mathrm{det}(sI-A)} 
   \frac{\mathrm{det}(sI+A^T-XB_u\Theta_{n_u}B_u^T)}{\mathrm{det}(sI+A^T)}.
\end{align*}
Substituting $\mathrm{det}(\Phi_J(s))$ using~(\ref{detphij}), it follows that 
\begin{multline}\label{fulldet}
    \frac{\mathrm{det}(sI-H)}{\mathrm{det}(sI+A^T)\mathrm{det}(sI-A)}\\ = \frac{\mathrm{det}(sI-A+B_u\Theta_{n_u}B_u^TX)}{\mathrm{det}(sI-A)} \\
   \times \frac{\mathrm{det}(sI+A^T-XB_u\Theta_{n_u}B_u^T)}{\mathrm{det}(sI+A^T)}.
\end{multline}
Now, using the Matrix Inversion Lemma~\cite{Bernstein}, the inverse of $N(s)$ can be calculated as
\begin{equation}
    N^{-1}(s) = I - \Theta_{n_u}B^T_uX(sI-A+B_u\Theta_{n_u}B^T_uX)^{-1}B_u. 
\end{equation}
This shows that the poles of $N(s)$ and $N^{-1}(s)$ serve as part of the denominator and numerator of equation~(\ref{fulldet}), respectively. 
Looking at Assumption~\ref{assumption3} and considering Remark~\ref{remark}, it follows that there can be no pole-zero cancellation on the left hand side of equation~(\ref{fulldet}).
Therefore, no pole-zero cancellation can exist in the right hand side of equation~(\ref{fulldet}). Thus, condition~\ref{d3} has been shown. 

Now observe that condition~\ref{d4} of Definition~\ref{definition1} follows directly from equation~(\ref{N}). 
Thus, we have demonstrated that $N(s)$ defined in~(\ref{N}) indeed defines a J-spectral factorization of $\Phi_J(s)$.

Sufficiency: Conversely, suppose that $\Phi_J(s)$ has a J-spectral factorization $\Phi_J(s)=N^\sim(s)\Theta_{n_u}N(s)$. 
We will show there exists a skew-symmetric solution $X$ of the NSARE~(\ref{NSARE}). 
Firstly, recall equation~(\ref{detphij}) and consider Assumptions~\ref{assumption1} and~\ref{assumption3}; it follows that there will be no pole zero cancellations in this expression. 
Thus, $\mathrm{det}(\Phi_J(s))$ must be of McMillan degree $2n$.
We now establish some useful claims to aid the proof. 
\begin{claim}
The matrix $N(s)$ is of McMillan degree $n$.
\end{claim}

To establish this claim, first let 
\begin{equation*}
    \mathrm{det}(N(s)) = \frac{\pi_n (s)}{\pi_d(s)}
\end{equation*}
where $\pi_n(s)$ and $\pi_d(s)$ (relatively prime) are of the same degree. Then, from equation~(\ref{detphij}) and condition~\ref{d1} of Definition~\ref{definition1} implies
\begin{align*}
    \mathrm{det}(\Phi_J) & = \frac{\mathrm{det}(sI-H)}{\mathrm{det}(sI+A^T)\mathrm{det}(sI-A)} \\
    & = \frac{\pi_{n^\sim}(s)\pi_n(s)}{\pi_{d^\sim}(s)\pi_d(s)}
\end{align*} with a McMillan degree $2n$.
From~\ref{d3} of Definition~\ref{definition1}, we obtain
\begin{align*}
    & \pi_{n^\sim}(s)\pi_n(s) = \mathrm{det}(sI-H) \\
    & \pi_{d^\sim}(s)\pi_d(s) = \mathrm{det}(sI+A^T)\mathrm{det}(sI-A).
\end{align*}
We can write 
\begin{align*}
    & \pi_{d^\sim}(s) = n_a, \\
    & \pi_d(s) = n_b.
\end{align*}
The McMillan degrees are related by \cite{Kalman}
\begin{equation*}
    2n \leq n_a + n_b.
\end{equation*}
Hence, $n_a = n_b = n$ and $\pi_d(s)$ will be of degree of $n$ and thus $N(s)$ will have McMillan degree $n$. This completes the proof of the claim.

We now define the matrix $L$ to be a skew-symmetric solution to the Lyapunov equation
\begin{equation}\label{lyapL}
    A^TL+LA+C^T\Theta_{n_y}C = 0.
\end{equation}

\begin{claim}\label{claim2}
The matrix $\Phi_J$ can be written in the form
\begin{equation*}
    \Phi_J(s)= \Theta_{n_u} - B_u^T(-sI-A^T)^{-1}LB_u-B_u^TL(sI-A)^{-1}B_u
\end{equation*} 
such that the realization of $-B_u^T(-sI-A^T)^{-1}LB_u$ and $-B_u^TL(sI-A)^{-1}B_u$ are minimal. Hence, the pair $(A, B_u)$ is controllable.
\end{claim}

To establish this claim, first observe that~(\ref{lyapL}) implies 
\begin{align*}
    & L(sI-A)^{-1}-(-sI-A^T)^{-1}L \\ & =(-sI-A^T)^{-1}C^T\Theta_{n_y}C(sI-A)^{-1}.
\end{align*}
Hence, we can write
\begin{align}
     \Phi_J(s) & = \Theta_{n_u} - B_u^T(-sI-A^T)^{-1}C^T\Theta_{n_u}C~(sI-A)^{-1}B_u \nonumber \\
     & = \Theta_{n_u} - B_u^T[L(sI-A)^{-1}-(-sI-A^T)^{-1}L]B_u \nonumber \\
     & = \Theta_{n_u} - B_u^T(-sI-A^T)^{-1}LB_u-B_u^TL(sI-A)^{-1}B_u \label{equality}
\end{align} 
as required. 
Also, it follows from this expression for $\Phi_J(s)$ that its McMillan degree satisfies the inequality
\begin{align*}
    \delta(\Phi_J(s)) & \leq \delta(- B_u^T(-sI-A^T)^{-1}LB_u) \\ 
    & + \delta(-B_u^TL(sI-A)^{-1}B_u) \\
    & \leq n + n.
\end{align*}
We know that $\delta(\Phi_J(s)) = 2n$ and hence equality~(\ref{equality}) must hold. 
Therefore, $\delta(- B_u^T(-sI-A^T)^{-1}LB_u) = n$ and $\delta(-B_u^TL(sI-A)^{-1}B_u) = n$. 
That is the realizations $- B_u^T(-sI-A^T)^{-1}LB_u$ and $-B_u^TL(sI-A)^{-1}B_u$ are minimal.

It remains to show that $L$ is a skew-symmetric solution to~(\ref{lyapL}). Add equations~(\ref{lyapL}) and the transpose of~(\ref{lyapL})
\begin{align*}
     A^TL+LA+C^T\Theta_{n_y}C + A^TL^T+L^TA-C^T\Theta_{n_y}C & = 0 \\
     A^T(L+L^T) + (L+L^T)A & = 0.
\end{align*}
Since $A$ is Hurwitz according to Assumption \ref{assumption1}, this implies that $L+L^T = 0$ i.e., $L^T = -L$. This completes the proof of the claim. 

\begin{claim}\label{claim3}
There exists a minimal realization of $N(s)$ of the form $N(s)=I+N_A(sI-A)^{-1}B_u$.
\end{claim}

To establish this claim, first recall condition~\ref{d4} of Definition~\ref{definition1}. 
Hence, $N(s)$ will have a minimal realization of the form $N(s)=I+N_A(sI-F)^{-1}E$ where $F$ is a $n \times n$ matrix. 
Also, since we know $N(s)$ is analytic in Re $s \geq 0$ therefore the matrix $F$ will be Hurwitz. 
We can now define the matrix $Y$ to be the solution to the Lyapunov equation
\begin{equation}\label{lyapY}
    F^TY + YF + N_A^T\Theta_{n_u}N_A = 0.
\end{equation}
Using this equation, it follows that 
\begin{align*}
    & Y(sI-F)^{-1}+(-sI-F^T)^{-1}Y \\ & =(-sI-F^T)^{-1}N_A^T\Theta_{n_u}N_A(sI-F)^{-1}.
\end{align*}
Thus, 
\begin{equation}\label{eqn1}
    \begin{split}
        \Phi_J(s) 
     & = N^\sim(s)\Theta_{n_u}N(s) \\ 
     & = (I+N_A(sI-F)^{-1}E)^\sim \Theta_{n_u}(I+N_A(sI-F)^{-1}E) \\ 
     & = \Theta_{n_u}+\Theta_{n_u}N_A(sI-F)^{-1}E+E^T(-sI-F^T)^{-1}N_A^T\Theta_{n_u} \\
     & + E^T(-sI-F^T)^{-1}N_A^T\Theta_{n_u}N_A(sI-F)^{-1}E \\
     & = \Theta_{n_u}+\Theta_{n_u}N_A(sI-F)^{-1}E+E^T(-sI-F^T)^{-1}N_A^T\Theta_{n_u} \\
     & + E^T[Y(sI-F)^{-1}+(-sI-F^T)^{-1}Y]E \\
     & = \Theta_{n_u}+\Theta_{n_u}N_A(sI-F)^{-1}E+E^T(-sI-F^T)^{-1}N_A^T\Theta_{n_u} \\
     & + E^TY(sI-F)^{-1}E+E^T(-sI-F^T)^{-1}YE \\
     & = \Theta_{n_u}+(\Theta_{n_u}N_A+E^TY)(sI-F)^{-1}E \\
     & +E^T(-sI-F^T)^{-1}(N_A^T\Theta_{n_u}+YE)
    \end{split}
\end{equation}
Using this equation, it follows that the McMillan degree of $F(s)$ satisfies the inequality
\begin{align*}
    \delta(\Phi_J(s)) & \leq \delta((\Theta_{n_u}N_A+E^TY)(sI-F)^{-1}E) \\ 
    & + \delta(E^T(-sI-F^T)^{-1}(N_A^T\Theta_{n_u}+YE)) \\
    & \leq n + n.
\end{align*}
However, we know that $\delta(\Phi_J(s)) = 2n$ and hence equality must hold. Thus, $\delta((\Theta_{n_u}N_A+E^TY)(sI-F)^{-1}E)=n$ and $\delta(E^T(-sI-F^T)^{-1}(N_A^T\Theta_{n_u}+YE))=n$. That is, the realizations $(\Theta_{n_u}N_A+E^TY)(sI-F)^{-1}E)$ and $E^T(-sI-F^T)^{-1}(N_A^T\Theta_{n_u}+YE)$ are minimal.

We now compare the expression for $\Phi_J(s)$ obtained in Claim~\ref{claim2} and equation~(\ref{eqn1}). It follows that
\begin{align*}
    \Phi_J(s) & = \Theta_{n_u} - B_u^T(-sI-A^T)^{-1}LB_u-B_u^TL(sI-A)^{-1}B_u \\
    & = \Theta_{n_u}+(\Theta_{n_u}N_A+E^TY)(sI-F)^{-1}E \\
    & + E^T(-sI-F^T)^{-1}(N_A^T\Theta_{n_u}+YE). 
\end{align*}
Equating stable and anti-stable terms in this equation, it follows that
\begin{equation}\label{eqn2}
    B_u^T(-sI-A^T)^{-1}(-LB_u) = E^T(-sI-F^T)^{-1}(N_A^T\Theta_{n_u}+YE)
\end{equation}and 
\begin{equation}\label{eqn3}
    -B_u^TL(sI-A)^{-1}B_u = (\Theta_{n_u}N_A+E^TY)(sI-F)^{-1}E.
\end{equation}
However, both sides of equation~(\ref{eqn3}) are minimal realizations. 
Thus, we conclude that there exists a minimal realization of $N(s)$ such that $F=A$ and $E=B_u$. 
In a similar manner to Claim~\ref{claim2}, it remains to show that $Y$ is a skew-symmetric solution to~(\ref{lyapY}). 
Adding equations~(\ref{lyapY}) and its transpose
\begin{align*}
     F^TY+YF+N_A^T\Theta_{n_u}N_A + F^TY^T+Y^TF-N_A^T\Theta_{n_u}N_A & = 0 \\
     F^T(Y+Y^T) + (Y+Y^T)F & = 0.
\end{align*}
Since $F$ is Hurwitz, this implies that $Y+Y^T = 0$ i.e., $Y^T = -Y$. This completes the proof of the claim. 

Now, returning to the proof of the theorem, substitute $F=A$ and $E=B_u$ into equations~(\ref{eqn2}) and~(\ref{eqn3}). Thus
\begin{equation*}
    B_u^T(-sI-A^T)^{-1}(-LB_u) = B_u^T(-sI-A^T)^{-1}(N_A^T\Theta_{n_u}+YB_u)
\end{equation*}and 
\begin{equation*}
   -B_u^TL(sI-A)^{-1}B_u = (\Theta_{n_u}N_A+B_u^TY)(sI-A)^{-1}B_u.
\end{equation*}
However, we know from Claim~\ref{claim2} that equations~(\ref{eqn2}) and~(\ref{eqn3}) are minimal realizations. Therefore, it follows from the result above that
\begin{align*}
    -LB_u & = N_A^T\Theta_{n_u}+YB_u \\
    -B_u^TL & = \Theta_{n_u}N_A+B_u^TY.
\end{align*}
Defining $X=L+Y$, then
\begin{align*}
   -XB_u & = N_A^T\Theta_{n_u} \\
    -B_u^TX & = \Theta_{n_u}N_A,
\end{align*}
which provides the desired realizations. 
It now remains to show that $X$ is a solution of the NSARE~(\ref{NSARE}). Add~(\ref{lyapL}) and~(\ref{lyapY}), this gives
\begin{equation*}
    A^TL+LA+C^T\Theta_{n_y}C+F^TY + YF + N_A^T\Theta_{n_u}N_A=0.
\end{equation*}
Hence, substitute $X=L+Y$ and the result above gives
\begin{equation*}
    A^TX+XA+C^T\Theta_{n_y}C-XB_u\Theta_{n_u}B_u^TX=0
\end{equation*} as required.
In addition, from Claim~\ref{claim2} and~\ref{claim3}, we know that $L$ and $Y$ are skew-symmetric solution. 
Therefore, $X$ is a skew-symmetric solution of the NSARE~(\ref{NSARE}). 
It now follows from Theorem~\ref{thm2} that $G(s)$ is physically realizable with only direct feedthrough quantum noise. 
\end{proof}

\subsection{Frequency Response Condition}
The following corollary gives a necessary frequency response condition for physical realizability with only direct feedthrough quantum noise.
\begin{corollary}\label{corollary}
Suppose $G(s)$ is physically realizable with only direct feedthrough quantum noise.
Then the following frequency response condition 
\begin{equation}\label{freqdomcond}
    \mathrm{det}(\Theta_{n_u}-G^{\dagger}(j\omega)\Theta_{n_y}G(j\omega)) > 0
\end{equation}
holds for all $\omega$.
\end{corollary}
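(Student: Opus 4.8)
The plan is to evaluate $\Phi_J$ on the imaginary axis, turn the factorization identity of Theorem~\ref{theorem} into a frequency-wise congruence of constant matrices, and read off the sign of the determinant; the only delicate point will be strictness. First I would note that, since $A$, $B_u$ and $C$ are real, the para-conjugate collapses to a Hermitian conjugate on $j\mathbb{R}$: $G^\sim(j\omega)=G(-j\omega)^T=\overline{G(j\omega)}^{\,T}=G^\dagger(j\omega)$. Consequently $\Phi_J(j\omega)=\Theta_{n_u}-G^\dagger(j\omega)\Theta_{n_y}G(j\omega)$ is exactly the matrix appearing in~(\ref{freqdomcond}), so the corollary is equivalent to $\det\Phi_J(j\omega)>0$ for all $\omega$.

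Since $G(s)$ is physically realizable with only direct feedthrough quantum noise, Theorem~\ref{theorem} supplies a J-spectral factorization $\Phi_J(s)=N^\sim(s)\Theta_{n_u}N(s)$ with $N(s)$ as in~(\ref{N}). Using the same real-coefficient identity $N^\sim(j\omega)=N(j\omega)^\dagger$, this becomes $\Phi_J(j\omega)=N(j\omega)^\dagger\Theta_{n_u}N(j\omega)$, a congruence transformation of the constant matrix $\Theta_{n_u}$. Taking determinants and recalling that $\det(\Theta_{n_u})=1$ yields
\begin{equation*}
    \det\Phi_J(j\omega)=\lvert\det N(j\omega)\rvert^{2}\,\det(\Theta_{n_u})=\lvert\det N(j\omega)\rvert^{2}\ge 0 .
\end{equation*}
This already establishes the nonstrict inequality and reduces the whole corollary to showing that $N(j\omega)$ is nonsingular for every $\omega$.

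To locate the zeros of $N$ I would use~(\ref{detphijwithM2}) to write $\det N(s)=\det(sI-M)/\det(sI-A)$ with $M=A-B_u\Theta_{n_u}B_u^TX$. The poles of $\det N$ are eigenvalues of $A$ and hence avoid $j\mathbb{R}$ by Assumption~\ref{assumption1}. The zeros of $\det N$ are the eigenvalues of $M$; from the NSARE~(\ref{NSARE}) one has the invariant-subspace relation $H\left[\begin{smallmatrix}I\\ X\end{smallmatrix}\right]=\left[\begin{smallmatrix}I\\ X\end{smallmatrix}\right]M$, so every eigenvalue of $M$ is an eigenvalue of $H$, and comparing with~(\ref{detphij}) gives the factorization $\det(sI-H)=\det(sI-M)\det(sI+M)$. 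Hence $\det\Phi_J(j\omega)=0$ for some $\omega$ if and only if $H$ has a purely imaginary eigenvalue, and strictness of~(\ref{freqdomcond}) is equivalent to $H$ having no eigenvalues on the imaginary axis.

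The step I expect to be the genuine obstacle is exactly this last equivalence. The nonstrict bound is immediate, but excluding imaginary-axis eigenvalues of $H$ does not follow from Assumption~\ref{assumption3} alone: that assumption, together with Remark~\ref{remark}, only prevents $H$ from sharing eigenvalues with $A$ and $-A^{T}$, all of which already lie strictly off $j\mathbb{R}$ because $A$ is Hurwitz. To close the gap I would exploit the finer structure of the NSARE solution furnished by Theorem~\ref{thm2}, namely that $X$ is real, skew-symmetric and nonsingular: then $\left[\begin{smallmatrix}I\\ X\end{smallmatrix}\right]^{T}\left[\begin{smallmatrix}0&I\\ I&0\end{smallmatrix}\right]\left[\begin{smallmatrix}I\\ X\end{smallmatrix}\right]=X+X^{T}=0$, so the column space of $\left[\begin{smallmatrix}I\\ X\end{smallmatrix}\right]$ is Lagrangian for the nondegenerate symmetric form $S=\left[\begin{smallmatrix}0&I\\ I&0\end{smallmatrix}\right]$, for which $SH$ is skew-symmetric; this symplectic (J-lossless) structure of the factor $N$ is what should keep $\det N(j\omega)$ from vanishing. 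Proving rigorously that it \emph{forbids}, rather than merely constrains, an imaginary eigenvalue of $M$ is where the main work lies; in the absence of such an argument one obtains~(\ref{freqdomcond}) with $\ge$ unconditionally and with strict inequality under the natural non-degeneracy that $H$ has no eigenvalues on $j\mathbb{R}$.
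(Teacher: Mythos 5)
Your reduction is sound and, up to the final step, coincides in substance with the paper's: evaluate $\Phi_J$ on the imaginary axis, use the factorization $\Phi_J(j\omega)=N^{\dagger}(j\omega)\Theta_{n_u}N(j\omega)$ supplied by Theorem~\ref{theorem}, and observe $\det\Phi_J(j\omega)=\lvert\det N(j\omega)\rvert^{2}\det(\Theta_{n_u})=\lvert\det N(j\omega)\rvert^{2}\ge 0$. Your auxiliary identities $H\bigl[\begin{smallmatrix}I\\ X\end{smallmatrix}\bigr]=\bigl[\begin{smallmatrix}I\\ X\end{smallmatrix}\bigr]M$ and $\det(sI-H)=\det(sI-M)\det(sI+M)$ (the latter using the skew-symmetry of $X$ and $\Theta_{n_u}$) are also correct. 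The genuine gap is that you never establish the \emph{strict} inequality, which is what the corollary actually asserts: you leave open the possibility that $N(j\omega_0)$ is singular for some $\omega_0$, equivalently that $H$ has an eigenvalue on the imaginary axis, and you say so explicitly. As a proof of the stated result the proposal is therefore incomplete.

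The paper closes this last step by a different device: it asserts that $iN^{\dagger}(j\omega)\Theta_{n_u}N(j\omega)$ is \emph{congruent} to $i\Theta_{n_u}$ and invokes Sylvester's inertia theorem to conclude that this Hermitian matrix has $\tfrac{n_u}{2}$ positive and $\tfrac{n_u}{2}$ negative eigenvalues for every $\omega$, hence a nonzero determinant, which together with $\det\Phi_J(j\omega)\to\det(\Theta_{n_u})=1$ as $\omega\to\infty$ forces positivity throughout. You should be aware, however, that a congruence requires the transforming matrix $N(j\omega)$ to be nonsingular, which is precisely the fact at issue; condition~\ref{d3} of Definition~\ref{definition1} only forbids poles \emph{common} to $N$ and $N^{-1}$ and so does not on its face exclude a pole of $N^{-1}$ on the imaginary axis. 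In other words, you have correctly located the one delicate point of the argument. To complete your proof in the spirit of the paper you would either invoke the inertia argument (accepting the implicit nonsingularity of $N(j\omega)$, as the paper does), or strengthen the factorization to the standard bistable one ($N^{-1}$ analytic in $\mathrm{Re}\,s\ge 0$), or supply the missing argument that a nonsingular real skew-symmetric solution of~(\ref{NSARE}) under Assumptions~\ref{assumption1}--\ref{assumption3} excludes imaginary-axis eigenvalues of $H$; your Lagrangian-subspace sketch is a plausible route to the latter but is not carried out.
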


\begin{proof}
From Theorem~\ref{theorem}, it follows that $\Phi_J(s)$ has a J-spectral factorization. 
$\Phi_J(s)$ can also be written as 
    \begin{equation*}
    \Phi_J(s) = \Theta_{n_u}-G^{\sim}(s)\Theta_{n_y}G(s).
    \end{equation*}
Now, we let $s=j\omega$ 
    \begin{equation}\label{phijomega}
    \Phi_J(j\omega) = \Theta_{n_u}-G^{\dagger}(j\omega)\Theta_{n_y}G(j\omega)
    \end{equation}
and take the determinant of equation~(\ref{phijomega})
    \begin{equation}\label{phijomegadet}
    \mathrm{det}(\Phi_J(j\omega)) = \mathrm{det}(\Theta_{n_u}-G^{\dagger}(j\omega)\Theta_{n_y}G(j\omega)).
    \end{equation}
First, look at the right-hand side of this expression as $\omega \rightarrow \infty$, and note that given $G(s)$ is strictly proper
    \begin{equation*}
        \mathrm{det}(G^{\dagger}(j\omega)\Theta_{n_y}G(j\omega)) \rightarrow 0
    \end{equation*}
as $\omega \rightarrow \infty$. Hence, 
    \begin{equation*}
        \mathrm{det}(\Phi_J(j\omega)) \rightarrow \mathrm{det}(\Theta_{n_u}) = 1 > 0
    \end{equation*}
as $\omega \rightarrow \infty$.

Also, it follows from~\ref{d1} in Definition~\ref{definition1}
that\begin{equation*}
        \mathrm{det}(\Phi_J(j\omega)) = \mathrm{det}(N^{\dagger}(j\omega)\Theta_{n_u}N(j\omega))
\end{equation*}
for all $\omega$. 

Furthermore, the matrix $iN^{\dagger}(j\omega)\Theta_{n_u}N(j\omega)$ is congruent to the matrix $i\Theta_{n_u}$ which has $\frac{n_u}{2}$ positive eigenvalues and $\frac{n_u}{2}$ negative eigenvalues. 
Hence, using the Inertia Theorem~\cite[pp. 281-282, Definition 4.5.6]{HornJohnson}, it follows that the eigenvalues of $iN^{\dagger}(j\omega)\Theta_{n_u}N(j\omega)$ will have this same property. 
From this, it follows that the $\det(N^{\dagger}(j\omega)\Theta_{n_u}N(j\omega))$ must be real and non-zero for any $\omega$. Hence, it now follows that
 \begin{equation*}
        \mathrm{det}(\Phi_J(j\omega)) > 0
    \end{equation*} for all $\omega$.
\end{proof}

\section{Examples}\label{example}

\subsection{Example 1}
We first demonstrate our results through an example which considers a system from~\cite{JamesNurdinPetersen}. 
Given matrices $A$, $B_u$ and $C$ and corresponding transfer function matrix $G(s) = C(sI-A)^{-1}B_u$. 
Here $A$ is Hurwitz and has no common eigenvalues with the Hamiltonian $H$:
\begin{equation*}
            \begin{split}
               A & =\begin{bmatrix} 
                        -1.3894\times I & -0.4472\times I \\
                        -0.2\times I & -0.25\times I
                    \end{bmatrix}; \\
                B_u & =\begin{bmatrix} 
                        -0.4472\times I\\ 0_{2 \times 2}
                    \end{bmatrix}; \\
              C & =\begin{bmatrix} 
                        -0.4472\times I & 0_{2 \times 2}.
                    \end{bmatrix}
            \end{split}
\end{equation*} 

\subsubsection{J-Spectral Factorization}
First, form $G(s)=C(sI-A)^{-1}B_u$ to obtain
\begin{equation*}
\begin{aligned}
 G(s) &=
\left[\begin{matrix}
            \frac{0.20s^3+0.38s^2+0.133s+0.01}{s^4+3.279s^3+3.203s^2+0.8456s+0.07} \\
            0
\end{matrix}\right.\\
&\qquad\qquad
\left.\begin{matrix}
           0 \\
            \frac{0.20s^3+0.38s^2+0.133s+0.01}{s^4+3.279s^3+3.203s^2+0.8456s+0.07} 
\end{matrix}\right]
\end{aligned}
\end{equation*}
and construct $\Phi_J(s)=\Theta_u-G^\sim(s)\Theta_{n_y}G(s)$ to obtain
\begin{equation}\label{exphij}
\begin{aligned}
\Phi_J(s) &=
\left[\begin{matrix}
            0 \\
            \frac{-s^4-3.28s^3-3.16s^2-0.83s-0.06}{s^4+3.28s^3+3.20s^2+0.85s+0.07}
\end{matrix}\right.\\
&\qquad\qquad
\left.\begin{matrix}
           \frac{s^4+3.28s^3+3.16s^2+0.826s+0.06}{s^4+3.28s^3+3.20s^2+0.85s+0.07} \\
           0
\end{matrix}\right].
\end{aligned}
\end{equation}

After a process of trial and error, we find a suitable $N(s)$ as
\begin{equation*}
N(s) =
\begin{bmatrix}
            \frac{s^2+1.62s+0.25}{s^2+1.64s+0.26} & 0\\
            0 & \frac{s^2+1.62s+0.25}{s^2+1.64s+0.26}
\end{bmatrix}.
\end{equation*}
It is straightforward to verify that this $N(s)$ satisfies the conditions in Definition~\ref{definition1}. 
Now using the method described in~\cite{VuglarPetersen}, the calculated solution $X$ of~(\ref{NSARE}) is
\begin{equation*}
X =
\begin{bmatrix}
    -0.0000   & 0.0763  &  0.0000 &  -0.0270 \\
    -0.0763   & -0.0000  &  0.0270  & -0.0000 \\ 
    -0.0000   & -0.0270 &   0.0000 &   0.0486 \\
    0.0270    & 0.0000 &  -0.0486  &  0.0000
\end{bmatrix}.
\end{equation*} 
It is straightforward to verify that this value of $X$ is real, non-singular and skew-symmetric.
Next, we check~(\ref{NSARE}) and find
\begin{equation*}
         A^TX + XA - XB_u\Theta_{n_u}B_u^TX + C^T\Theta_{n_y}C = 0.
    \end{equation*}
Thus, the calculated $X$ above solves the NSARE~(\ref{NSARE}). 
From Theorem~\ref{theorem}, we conclude that $G(s)$ is physically realizable using only direct feedthrough quantum noise where we calculate the matrix $B_v$ using the approach of \cite{VuglarPetersen} as
\begin{align*}
    B_{v} & = \Theta C^T diag(J) \\
    & = \begin{bmatrix} 
         0.4472    & 0\\
         0    & 0.4472\\
         0         & 0\\
         0         & 0
        \end{bmatrix}.
\end{align*}
This formula comes from Theorem~\ref{theoremPR}, \cite{VuglarPetersen} uses a state space transformation so that $X$ $\rightarrow$ $\Theta_n$. The matrix $C$ would need to be transformed via the same state space transformation.
\subsubsection{Frequency Response Condition}
In the above example, we showed that $G(s)$ is physically realizable with only direct feedthrough quantum noise. 
Therefore, according to Corollary~\ref{corollary}, the frequency response condition~(\ref{freqdomcond}) should hold. 
This can be seen in Figure~\ref{fig:system}.
 \begin{figure}[htp!] 
    \centering
    \includegraphics[width=8cm]{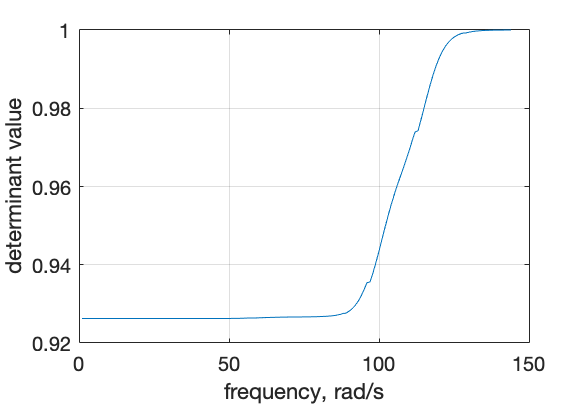}
    \caption{Plot of~$\mathrm{det}(\Theta_{n_u}-G^{\sim}(j\omega)\Theta_{n_u}G(j\omega))$ versus $\omega$.}
    \label{fig:system}
\end{figure}

\subsection{Example 2}
We now give an example where the frequency response condition~(\ref{freqdomcond}) holds but the condition for a transfer function matrix to be physically realizable in Theorem~\ref{thm2} fails. Consider the matrices 
\begin{equation*}
            \begin{split}
               A & =\begin{bmatrix} 
                            0.8844  &  0.4385 &   0.3249   & 0.5466 \\
                            0.7209  &  0.4378 &   0.2462 &   0.5619\\
                            0.0186  &  0.1170  &  0.3427  &  0.3958\\
                            0.6748  &  0.8147  &  0.3757  &  0.3981
                    \end{bmatrix}; \\
                B_u & =\begin{bmatrix} 
                        0.5154 &   0.4001\\
                        0.6575  &  0.8319\\
                        0.9509  &  0.1343\\
                        0.7223  &  0.0605
                    \end{bmatrix}; \\
              C & =\begin{bmatrix} 
                        0.0842   &  0.3242  &  0.0117  &  0.0954\\
                        0.1639  &  0.3017 &   0.5399  &  0.1465
                    \end{bmatrix}.
            \end{split}
\end{equation*} 
Figure~\ref{fig:system2} gives a plot of the corresponding frequency response condition~(\ref{freqdomcond}).
\begin{figure}[htp!] 
    \centering
    \includegraphics[width=8cm]{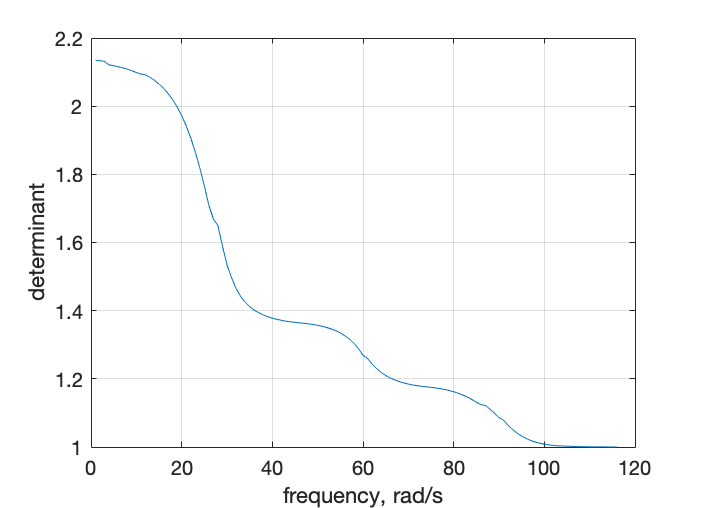}
    \caption{Plot of~$\mathrm{det}(\Theta_{n_u}-G^{\sim}(j\omega)\Theta_{n_u}G(j\omega))$ versus $\omega$.}
    \label{fig:system2}
\end{figure}
It can be seen in Figure~\ref{fig:system2} that the frequency response condition~(\ref{freqdomcond}) holds. 
Now following the steps in~\cite{VuglarPetersen} to find a solution for the Riccati equation~(\ref{NSARE}), we calculate the eigenvalues of the Hamiltonian matrix $H$~(\ref{hamil}) and find that this matrix has no purely imaginary eigenvalues. 
We form the matrix $\begin{bmatrix} X_1 \\ X_2 \end{bmatrix}$ of eigenvectors corresponding to the eigenvalues, $\lambda$ of $H$ satisfying $Re(\lambda) < 0$. 
In particular, the matrix $X_1$ is found as  
\begin{equation*}
\begin{aligned}
X_1 & = 
\left[\begin{matrix}
            0.0731 + 0.0000i  & 0.1216 + 0.0000i \\
            0.2869 + 0.0000i &  0.5130 + 0.0000i \\
            -0.1730 + 0.0000i  & 0.3364 + 0.0000i \\
            -0.2628 + 0.0000i & -0.6759 + 0.0000i
\end{matrix}\right.\\
&\qquad\qquad
\left.\begin{matrix}
           & 0.1166 - 0.0240i &  0.1166 + 0.0240i \\
           & 0.1546 - 0.0660i &  0.1546 + 0.0660i \\
           & 0.2213 + 0.2812i  & 0.2213 - 0.2812i \\
           & -0.5983 + 0.0000i & -0.5983 + 0.0000i
\end{matrix}\right]. 
\end{aligned}
\end{equation*} 
It is straightforward to verify that this matrix $X_1$ is singular. 
Hence, it follows as in~\cite{VuglarPetersen} that the Riccati equation~(\ref{NSARE}) does not have a solution.
Therefore, it follows from Theorem~\ref{thm2} that this $G(s)=C(sI-A)^{-1}B_u$ is not physically realizable with only direct feedthrough quantum noise.
Thus, we have shown that the frequency response condition~(\ref{freqdomcond}) is not sufficient for physical realizability with only direct feedthrough quantum noise as defined in Definition \ref{dfn2}.
\section{CONCLUSION AND FUTURE WORK}\label{conclusion}

\subsection{Conclusion}
In~\cite{VuglarPetersen}, the existence of a non-singular, real, and skew-symmetric solution $X$ to the NSARE~(\ref{NSARE}) guarantees a strictly proper transfer function can be implemented as a physically realizable quantum system with only direct feedthrough noise. In this work, we have shown that the J-spectral factorization of $\Phi_J(s)$ gives a necessary and sufficient condition for $G(s)$ to be physically realizable with only direct feedthrough quantum noise. We also present a necessary frequency response condition. 

\subsection{Future Work}
Future work will consider making the frequency response condition in Corollary~\ref{corollary} to a necessary and sufficient condition by somehow extending the notion of physically realizability with only direct feedthrough noise.


\begin{thebibliography}{99}

\bibitem{Molinari}
B. P. Molinari, Equivalence Relations for the Algebraic \text{R}iccati Equation, {\it The Bell System Technical Journal}, vol. 42, 1963, pp 355-382.

\bibitem{Kalman}
R. E. Kalman, Irreducible Realizations and the Degree of a Rational Matrix, {\it Journal of the Society for Industrial and Applied Mathematics}, vol. 13, 1965, pp 520-544.

\bibitem{HudsonParthasarathy}
R. L. Hudson and K.R. Parthasarathy, Quantum \text{I}to's formula and stochastic evolutions, {\it Communications in Mathematical Physics}, vol. 93, 1984, pp 301-323.

\bibitem{Belavkin}
V. P. Belavkin, Quantum continual measurements and a posteriori collapse on \text{CCR}, {\it Communications in Mathematical Physics}, vol. 146, 1992, pp 611-631.

\bibitem{Parthasarathy}
K.R. Parthasarathy, {\it An Introduction to Quantum Stochastic Calculus}, Birkhäuser Basel; 1992.

\bibitem{ZhouDoyleGlover}
K. Zhou and J. C. Doyle and K. Glover, {\it Robust and Optimal Control}, Prentice Hall Press, New Jersey; 1996.

\bibitem{Kimura}
H. Kimura, {\it Chain-scattering approach to H$^{\infty}$ control}, Birkhäuser; 1997.

\bibitem{GardinerZoller}
C. Gardiner and P. Zoller, {\it Quantum Noise}, Springer-Verlag Berlin Heidelberg; 2004.

\bibitem{BachorRalph}
Hans‐A. Bachor and T. C. Ralph, {\it A Guide to Experiments in Quantum Optics}, Wiley-VCH; 2004.

\bibitem{WallsMilburn}
D.F. Walls and G. J. Milburn, {\it Quantum Optics}, Springer-Verlag Berlin Heidelberg; 2008.

\bibitem{JamesNurdinPetersen}
M. R. James and H. I. Nurdin and I. R. Petersen, H$^{\infty}$ Control of Linear Quantum Stochastic Systems, {\it IEEE Transactions on Automatic Control}, vol. 53, 2008, pp 1787-1803.

\bibitem{Bernstein}
D. Bernstein, {\it Matrix \text{M}athematics: Theory, Facts, and Formulas: (Second edition)}, Princeton University Press; 2009.

\bibitem{HornJohnson}
R. A. Horn and C. R. Johnson, {\it Matrix Analysis}, Cambridge University Press, New York; 2013.

\bibitem{VuglarPetersen}
Shanon L. Vuglar and Ian R. Petersen, Quantum Noises, Physical Realizability and Coherent Quantum Feedback Control, {\it IEEE Transactions on Automatic Control}, vol. 62, 2017, pp 998-1003.

\bibitem{Stefanovski}
J. D. Stefanovki, Strongly (J,J') lossless rational matrices and H$^{\infty}$ problem, {\it International Journal of Robust and Nonlinear Control}, vol. 28, 2018, pp 4261-4286.

\bibitem{ThienVuglarPetersen}
Rebbecca T. TY and Shanon L. Vuglar and Ian R. Petersen, "When do additional quantum noises affect controller performance?", {\it in 2020 59th IEEE Conference on Decision and Control (CDC)}, 2020, pp. 3855-3859.

\end{thebibliography}
\end{document}